\newtheorem{theorem}{Theorem}
\newtheorem{definition}[theorem]{Definition}
\newtheorem{lemma}[theorem]{Lemma}
\newtheorem{proposition}[theorem]{Proposition}
\newtheorem{corollary}[theorem]{Corollary}
\newtheorem{example}[theorem]{Example}
\newcommand{\Nat}{{\mathbb{N}}}
\newcommand{\Real}{{\mathbb{R}}}
\newcommand{\blocks}[2]{\mathcal{B}_{#1}({#2})}
\newcommand{\tuple}[1]{\langle #1 \rangle}
\newcommand{\limrun}[0]{\infty}
\newcommand{\trans}[1]{\mathchoice{\xrightarrow{#1}}{\xrightarrow{\smash{\lower1pt\hbox{$\scriptstyle #1$}}}}{\text{Error}}{\text{Error}}}
\newcommand{\alocc}[2]{|\!|#1|\!|_{#2}}
\newcommand{\occ}[2]{|#1|_{#2}}
\newcommand{\bad}[0]{\operatorname{Bad}}
\begin{document}

\title{On normality in shifts of finite type}
\author{Nicol\'as \'Alvarez \and Olivier Carton}

\date{\today}
\maketitle

\begin{abstract}
  In this paper we consider the notion of normality of sequences in shifts
  of finite type.  A sequence is normal if the frequency of each
  block exists and is equal to the Parry measure of the block.  We give a
  characterization of normality in terms of incompressibility by lossless
  transducers.  The result was already known in the case of the full shift.
\end{abstract}


\section{Introduction}

In this paper, we extend to the context of shifts the strong link between
normality and incompressibility by finite state machines.  This link was
known for the full shift, that is the set of all infinite sequences over a
fixed alphabet since the work of Schnorr and Stimm \cite{SchnorrStimm71}
and Dai \textit{et al.} \cite{Dai04}.

On the one hand, normality is a weak notion of randomness.  It has been
introduced by Borel in~\cite{Borel09} more than a hundred years ago.  Roughly
speaking, it is required for a sequence to be normal that for each length,
all possible blocks of that length occur with the same frequency in the
sequence.  It has been shown by Borel that almost all sequences (in a
measure-theoretic sense) are normal but almost nothing is known about
specific sequences coming from number theory like the expansions in some
base of fundamental numbers as $\sqrt 2$ or~$\pi$.  On the other hand,
compressibility of sequences, especially by finite state machines, also
known as transducers, has been studied since the early days of computer
science \cite{Lempel78}.  This is mainly due to the large range of
applications of compression techniques.  These two important notions are
linked together by the characterization of normality by incompressibility.
Normal sequences are exactly those which cannot be compressed by some
one-to-one transducer.  This is a rather robust characterization as it is
valid for many variants of transducers: non-deterministic, two-way
\cite{BecherCartonHeiber15,CartonHeiber15}.

The notion of normality has been extended to broader contexts like the one
of dynamical systems and especially shifts of finite type \cite{Madritsch}.
These extensions are based on the different characterizations of normality,
namely uniform distribution \cite{Bertrand96} and martingales
\cite{AlmarzaFigueira15}.  When sofic shifts are irreducible and aperiodic,
they have a measure of maximal entropy and a sequence is then said to be
normal if the frequency of each block equals its measure.  This extension
to shifts meets the original aim of normality to study expansions of
numbers in bases when the shift arises from a numerical systems like the
$\beta$-shifts coming from the numeration in a non-integer base~$\beta$.
Normality can be again interpreted as the good distribution of blocks
of digits in the expansion of a number in a base~$\beta$.

There are two main contributions in this paper.  The first one is to give
different formulations of the definition of normality and to show that they
indeed coincide.  These formulations are similar to the ones used in the
classical setting of the full shift and their equivalence in that case is
well-known \cite{Bugeaud12}.  The second contribution is a characterization
theorem that proves that a sequence is normal in a shift of finite type
exactly when it is incompressible in the shift by finite state transducers.
Again this characterization holds in the case of the full shift
\cite{SchnorrStimm71,Dai04}.

This link between normality and automata opens the question whether
selecting digits in a sequence with an oblivious automaton preserves
normality in a shift of finite type as it does in the case of the full
shift \cite{Agafonov68}.  Oblivious means here that the selection of a
digit is based on the prefix of the sequence before the digit but not
including it.
 
The paper is organized as follows. Section~\ref{sec:prelim} introduces
all basic notions like shifts and normality. Section~\ref{sec:equiv}
is devoted to the equivalence of the different definitions of normality
given in the previous section.  The notion of incompressibility by
transducer is defined in Section~\ref{sec:compress}.  The main result is
stated and proved in Section~\ref{sec:main}.

\section{Preliminaries} \label{sec:prelim}

\subsection{Notation}

We write $\Nat$ for the set of all natural numbers.  An alphabet~$A$ is a
finite set with at least two symbols.  We write $A^\omega$ for the set of
all infinite words over~$A$ and $A^k$ stands for the set of all words of
length~$k$.  The length of a finite word $w$ is denoted by~$|w|$.  The
positions in finite and infinite words are numbered starting from~$1$.  For
a word~$w$ and positions $1 \le i \le j \le |w|$, we let $w[i]$ and
$w[i..j]$ denote respectively the symbol at position~$i$ and the subword
of~$w$ from position $i$ to position~$j$ (inclusive).

For any finite set $S$ we denote its cardinality with $|S|$.
We write $\log$ for the logarithm in base~$2$. 

\subsection{Shift spaces and subshifts of finite type}

In this article we are going to work on shift spaces, in particular
subshifts of finite type (SFT).  Let $A$ be a given alphabet.  The
\emph{full shift} is the set $A^\omega$ of all (one-sided) infinite
sequences $(x_n)_{n\ge 0}$ of symbols in~$A$.  The shift~$\sigma$ is the
function from~$A^\omega$ to~$A^\omega$ which maps each sequence
$(x_n)_{n\ge 0}$ to the sequence $(x_n)_{n\ge 1}$ obtained by removing the
first symbol.

Let $F \subset A^*$ a set of finite words called \emph{forbidden blocks}.
The subshift~$X_F$ is the subset of~$A^\omega$ made of sequences without any
occurrences of blocks in~$F$.  More formally, it is the set
\begin{displaymath}
  X_F = \{ x : x[m..n] \notin F \text{ for each } 1 \le m \le n\}.
\end{displaymath}

A \emph{shift space} of~$A^\omega$ or simply a \emph{shift} is a subset~$X$
of~$A^\omega$ which is closed for the product topology and invariant under
the shift operator, that is $\sigma(X) = X$.  This is equivalent to the
existence of a subset~$F \subset A^*$ of forbidden blocks such that
$X = X_F$.  The shift space is said to be of \emph{finite type} if
$X = X_F$ for some finite set~$F$ of forbidden blocks
\cite[Def.~2.1.1]{LindMarcus92}.  Up to a change of alphabet, any shift
space of finite type is the same as a shift space~$X_F$ where any forbidden
block has length~$2$, that is $F \subset A^2$.  For simplicity, we always
assume that each forbidden block has length~$2$.  In that case, the
set~$F$ is fully determined by the $A \times A$-matrix
$M = (m_{ab})_{a,b \in A}$ where $m_{ab} = 1$ if $ab \notin F$ and
$m_{ab} = 0$ otherwise and we write $X = X_M$.  The shift~$X$ is called
\emph{irreducible} if the graph induced by the matrix~$M$ is strongly
connected, that is, for each symbols $a,b \in A$, there exists an
integer~$n$ (depending on $a$ and~$b$) such that $M^n_{ab} > 0$.  The
shift~$X$ is called \emph{irreducible and aperiodic} if there exists an
integer~$n$ such that $M^n_{ab} > 0$ for each symbols $a,b \in A$.

For a shift~$X$ and $n \in \Nat$, we let $\blocks{}{X}$ (resp.,
$\blocks{n}{X}$ denote the set of all blocks (resp., blocks of length~$n$)
that occur in sequences of~$X$.  The \emph{(topological) entropy}~$h(X)$ of
the shift~$X$ is defined by
\begin{displaymath}
  h(X) = \lim_{n \to \infty} \frac{\log |\blocks{n}{X}|}{n}.
\end{displaymath}
If $X = X_M$ for some $\{0,1\}$ matrix~$M$, the entropy entropy $h(X)$ can
be computed as follows.  By the Perron-Frobenius theory, the non-negative
matrix~$M$ has a positive eigenvalue $\lambda$ of greatest modulus
\cite[Thm~1.5]{Senata06}.  The entropy~$h(X)$ of $X = X_M$ is then equal to
$\log \lambda$ \cite[Obs.~1.4.2]{Kitchens98}.

\begin{example}[Golden mean shift]
  The \emph{golden mean shift} is the shift space $X_F \subset
  \{0,1\}^\omega$ where the set of forbidden blocks is $F = \{ 11 \}$.  It
  is made of all sequences over $\{0,1\}$ with no two consecutive~$1$.
  This subshift is also equal to $X_M$ where $M = \left(\begin{smallmatrix}
      1&1 \\ 1&0
    \end{smallmatrix}\right)$ and its entropy is therefore $\log \lambda$
  where $\lambda = (1+\sqrt 5)/2$ is the golden mean.
\end{example}

A \emph{probability measure on $A^*$} is a function
$\mu : A^* \rightarrow [0,1]$ such that $\mu(\varepsilon) = 1$ and
\begin{displaymath}
  \sum_{a\in A}{\mu(wa)} = \mu(w)
\end{displaymath}
holds for each word $w \in A^*$.  The simplest example of a probability
measure is a \emph{Bernoulli measure}.  It is a morphism from~$A^*$
to~$[0,1]$ (endowed with multiplication) such that
$\sum_{a \in A}{\mu(a)} = 1$.  Among the Bernoulli measures there is the
\emph{uniform measure} which maps each word $w \in A^*$ to $|A|^{-|w|}$.
In particular, each symbol~$a$ is mapped to $\mu(a) = 1/|A|$.

By the Carath\'eodory extension theorem, a measure~$\mu$ on~$A^*$ can be
uniquely extended to a probability measure~$\hat{\mu}$ on~$A^\omega$ such
that $\hat{\mu}(wA^\omega) = \mu(w)$ holds for each word $w \in A^*$.  In
the rest of the paper, we use the same symbol for $\mu$ and~$\hat{\mu}$.  A
probability measure~$\mu$ is said to be \emph{(shift) invariant} if the
equality
\begin{displaymath}
  \sum_{a \in A}{\mu(aw)} = \mu(w)
\end{displaymath}
holds for each word $w \in A^*$.  We now introduce the entropy of a measure
\cite[Chap.~4]{Walter82}.  The \emph{entropy}~$h(\mu)$ of a measure~$\mu$
is defined by
\begin{displaymath}
  h(\mu) = \lim_{n \to \infty} -\frac{1}{n}\sum_{w \in A^n} \mu(w) \log \mu(w)
\end{displaymath}
with the usual convention $0 \log 0 = 0$.

For a stochastic matrix~$P$ and a stationary distribution~$\pi$, that is a
line vector such that $\pi P = \pi$, the \emph{Markov measure}
$\mu_{\pi,P}$ is the invariant measure defined by the following formula
\cite[Lemma~6.2.1]{Kitchens98}.
\begin{displaymath}
  \mu_{\pi,P} (a_1 a_2 \cdots a_k) = \pi_{a_1} P_{a_1a_2} \cdots P_{a_{k-1} a_k}
\end{displaymath}

A simple computation shows that the entropy $h(\mu_{\pi,P})$ of such a
measure is given by the following formula \cite[Obs.~6.2.10]{Kitchens98}.
\begin{displaymath}
  h(\mu_{\pi,P}) = -\sum_{i,j \in A} \pi_i P_{ij} \log P_{ij}
\end{displaymath}
with the convention $0 \log 0 = 0$.

A measure~$\mu$ is \emph{compatible} with a shift~$X$ if it only puts
weight on blocks of~$X$, that is, $\mu(w) > 0$ implies $w \in \blocks{}{X}$
for each word~$w$.  It is well known that each compatible measure~$\mu$
satisfies $h(\mu) \le h(X)$ \cite[Obs.~6.2.13]{Kitchens98}.  For a subshift
of finite type, there is a unique compatible measure with maximal
entropy~$h(X)$ \cite[Thm.~6.2.20]{Kitchens98}. This measure is called the
\emph{Parry measure} and it is a Markov measure. In the rest of the
document we let $\mu^X$ denote the Parry measure of an SFT~$X$.  This
measure can be explicitly given as follows.  The Parry measure of an
SFT~$X_M$ is the (one step) Markov measure given by the stochastic matrix
$P = (P_{i,j})$ where $P_{i,j} = M_{i,j} r_j/\lambda r_i$ and the
stationary probability distribution~$\pi$ defined by $\pi_i = l_ir_i$,
where $\lambda$ is the Perron eigenvalue of the matrix~$M$ and the vectors
$l$ and~$r$ are respectively left and right eigenvectors for~$\lambda$
normalized so that $\sum_{i=1}^k l_ir_i = 1$.

\begin{example}[Parry measure of the golden mean shift]
  Consider again the golden mean shift~$X$.  Its Parry measure is the
  Markov measure $\mu_{\pi,P}$ where $\pi$ is the distribution
  $\pi = (\lambda^2/(1+\lambda^2), 1/(1+\lambda^2))$ and $P$ is the
  stochastic matrix
  $P = \left(\begin{smallmatrix} 1/\lambda & 1/\lambda^2 \\
      1 & 0 \end{smallmatrix}\right)$ where $\lambda$ is the golden mean.
\end{example}

\subsection{Normality}

We start with the notation for the number of occurrences of a given word
within another word.

\begin{definition}[Occurences]
  For $w$ and $u$ two words, the number $\occ{w}{u}$ of \emph{occurrences}
  of~$u$ in~$w$ and the number $\alocc{w}{u,r}$ of \emph{aligned
    occurrences with offset $r$} of~$u$ in~$w$ are respectively given by
  \begin{align*}
    \occ{w}{u}  & =|\{ i : w[i..i+|u|-1] = u \}|, \\
    \alocc{w}{u,r} & =|\{ i : w[i..i+|u|-1] = u \text{ and } i = r \bmod |u|\}|.
  \end{align*}
  The number $\alocc{w}{u}$ of \emph{aligned occurrences} is given by
  \begin{align*}
  \alocc{w}{u} = \alocc{w}{u,1}
  \end{align*}
  
\end{definition}

For example, $\occ{aaaa}{aa} = 3$, $\alocc{aaaa}{aa} = 2$ and
$\alocc{aaaa}{aa,2} = 1$.

Borel's definition~\cite{Borel09} of normality for a sequence
$x \in A^\omega$ is that $x$ is \emph{normal} if for each
integer~$\ell \ge 1$ and each word $w \in A^{\ell}$ of length~$\ell$,
\begin{displaymath}
  \lim_{n \to \infty} \frac{\alocc{x[1..n\ell]}{w}}{n} = |A|^{-\ell}
\end{displaymath}

This definition is extended to the case of an SFT by replacing the uniform
measure by the Parry measure of the SFT.  A sequence~$x$ of an SFT~$X$ is
called \emph{normal (in~$X$)} if for each integer~$\ell \ge 1$ and each
word $w \in A^{\ell}$ of length~$\ell$,
\begin{displaymath}
 \lim_{n \to \infty} \frac{\alocc{x[1..n\ell]}{w}}{n} = \mu^X(w) 
\end{displaymath}
where $\mu^X$ is the Parry measure of~$X$.  This definition is based on
aligned occurrences.  It will be seen in the next section that alternative
definitions based on non-aligned occurrences are actually equivalent.

\section{Equivalence between definitions of normality} \label{sec:equiv}

In the literature there are several definitions of normality of a sequence.
Some of them are based on aligned occurrences and some others are based on
non-aligned occurrences.  It is part of the folklore that all these
definitions are indeed equivalent.  For the classical normality, proofs can
be found in~\cite[Thms 4.2 and 4.5]{Bugeaud12}.  For completeness, we
provide here a proof for the case of Markov measure.

\begin{theorem} \label{thm:equivdefs}
  Let $\mu$ be a Markov measure on~$A^*$.  For each sequence~$x$, the following
  three statements are equivalent.
  \begin{enumerate}[label=(\arabic*),ref=(\arabic*),itemsep=0cm]
  \item \label{ali} \emph{Aligned normality:} 
    for each integer~$\ell$ and each word $w \in A^{\ell}$, 
    \begin{displaymath}
     \lim_{n \to \infty} \frac{\alocc{x[1..n\ell]}{w}}{n} = \mu(w)
    \end{displaymath}
  \item \label{sali} \emph{Strong aligned normality:} 
    for each $\ell, k \in \Nat$ and each word $w \in A^{\ell}$,
    \begin{displaymath}
      \lim_{n \to \infty} \frac{\alocc{\sigma^k(x)[1..n\ell]}{w}}{n} =
      \mu(w)
    \end{displaymath}
  \item \label{nali} \emph{Non-aligned normality:} 
    for each word $w \in A^*$,
    \begin{displaymath}
     \lim_{n \to \infty} \frac{\occ{x[1..n]}{w}}{n} = \mu(w).
    \end{displaymath}
  \end{enumerate}
\end{theorem}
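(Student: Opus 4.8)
The plan is to prove the cycle of implications $\ref{sali}\Rightarrow\ref{ali}\Rightarrow\ref{nali}\Rightarrow\ref{sali}$. The implication $\ref{sali}\Rightarrow\ref{ali}$ is immediate: taking $k=0$ in $\ref{sali}$ gives $\ref{ali}$ verbatim. Throughout I will use that a Markov measure is shift invariant, so that the cylinder $\{y\in A^\omega : y[p..p+|u|-1]=u\}$ has measure $\mu(u)$ for every word $u$ and every position $p\ge 1$.

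For $\ref{ali}\Rightarrow\ref{nali}$ I would argue by tiling. Fix $w$ with $|w|=m$ and a large integer $q\ge m$. Cut $x[1..N]$ into $n=\lfloor N/q\rfloor$ consecutive aligned blocks $V_1,\dots,V_n\in A^q$ plus a remainder of length $<q$. An occurrence of $w$ in $x[1..N]$ either lies inside some $V_i$ or straddles a boundary between two consecutive blocks, and there are at most $m-1$ of the latter kind per boundary; hence $\occ{x[1..N]}{w}=\sum_{i=1}^{n}\occ{V_i}{w}+O(mn)+O(q)$. Grouping the $V_i$ by value and applying $\ref{ali}$ at length $q$ gives $\sum_{i}\occ{V_i}{w}=\sum_{v\in A^q}\bigl(n\mu(v)+o(n)\bigr)\occ{v}{w}=n\sum_{v\in A^q}\mu(v)\occ{v}{w}+o(n)$, and by shift invariance $\sum_{v\in A^q}\mu(v)\occ{v}{w}=\sum_{p=1}^{q-m+1}\mu(w)=(q-m+1)\mu(w)$. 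Dividing by $N$ and letting $N\to\infty$ with $q$ fixed gives $\limsup_{N}\bigl|\occ{x[1..N]}{w}/N-\mu(w)\bigr|=O(m/q)$; since $q$ is arbitrary this limsup is $0$, which is $\ref{nali}$.

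For $\ref{nali}\Rightarrow\ref{sali}$ I would first observe that non-aligned normality is preserved by the shift: since $\occ{\sigma^k(x)[1..N]}{w}=\occ{x[1..N+k]}{w}+O(k)$, if $x$ satisfies $\ref{nali}$ then so does $\sigma^k(x)$ for every $k$. It therefore suffices to establish the single implication $\ref{nali}\Rightarrow\ref{ali}$ for an arbitrary sequence and then apply it to each $\sigma^k(x)$: the conclusion of $\ref{ali}$ for $\sigma^k(x)$, quantified over all $k$, is exactly $\ref{sali}$ for $x$. So everything reduces to the statement that a non-aligned normal sequence is aligned normal, i.e.\ that for every $\ell$ and every $w\in A^\ell$ the aligned blocks $x[(j-1)\ell+1..j\ell]$ take the value $w$ with frequency $\mu(w)$.

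This last implication is the heart of the argument and the step I expect to be the main obstacle, since it is where the local notion of a non-aligned occurrence must be converted into the non-local notion of an occurrence whose starting position is $\equiv 1 \pmod \ell$. I would treat it by a double-counting argument in the spirit of the classical proof of Niven and Zuckerman. Writing $A_r(N)$ for the number of occurrences of $w$ in $x[1..N]$ starting at a position $\equiv r\pmod\ell$, one has $\sum_{r=1}^{\ell}A_r(N)=\occ{x[1..N]}{w}$, so $\ref{nali}$ already gives $\sum_r A_r(N)=\mu(w)N+o(N)$, while $A_1(n\ell)=\alocc{x[1..n\ell]}{w}$ is the quantity of interest; the task is to show that the $\ell$ residue classes contribute asymptotically equally. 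I would obtain this by counting occurrences of $w$ against occurrences of words whose length is a large multiple of $\ell$, using $\ref{nali}$ at those lengths together with the shift invariance of $\mu$ to pin down the main term, and then letting the multiple tend to infinity to absorb the boundary error. Carrying out this estimate carefully — in particular controlling the distribution of those long blocks without circularly assuming aligned normality at the larger length — is where essentially all of the work of the theorem lies.
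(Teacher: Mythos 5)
The pieces you actually carry out are correct, and two of them take a different (valid) route from the paper: you prove \ref{ali}$\Rightarrow$\ref{nali} directly by a tiling argument (the paper goes through \ref{sali} instead, writing $\occ{x[1..n]}{w}=\sum_{i=0}^{\ell-1}\alocc{\sigma^i(x)[1..n-i]}{w}$), and you reduce \ref{nali}$\Rightarrow$\ref{sali} to \ref{nali}$\Rightarrow$\ref{ali} via the easy observation that non-aligned normality is stable under the shift. The problem is that \ref{nali}$\Rightarrow$\ref{ali} itself is never proved: you correctly identify it as the heart of the theorem and then only gesture at a Niven--Zuckerman/Cassels-style double counting, explicitly deferring ``essentially all of the work.'' So as submitted the proof has a genuine gap at its crucial step.

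Concretely, the missing ingredient is not supplied by the tools you list (non-aligned normality at long lengths plus shift invariance of $\mu$). In the Cassels argument one bounds $\alocc{x[1..n\ell]}{w}$ by sliding windows $v$ of length $k\ell-1$ over $x$ and majorizing the aligned count in each window by $\alocc{v}{w,*}=\max_{i}\alocc{v}{w,i}$; condition \ref{nali} then turns the sum over windows into $\sum_{v}\mu(v)\,\alocc{v}{w,*}/(k-1)$. Shift invariance of $\mu$ only tells you that, for each \emph{fixed} offset $i$, the $\mu$-average of $\alocc{v}{w,i}/(k-1)$ is about $\mu(w)$; it says nothing about the $\mu$-average of the \emph{maximum} over offsets, which is what the sliding-window bound forces on you, and which can be much larger unless most windows are ``typical.'' This is exactly why the paper introduces the set $\bad(w,k,\varepsilon)$ of blocks $v\in A^{k\ell-1}$ with $\alocc{v}{w,*}>(k-1)(\mu(w)+\varepsilon)$ and invokes the ergodic theorem for irreducible Markov chains to get $\mu(\bad(w,k,\varepsilon))<\delta$ for large $k$; splitting the sum over $\bad$ and its complement then yields $\limsup_n\alocc{x[1..n\ell]}{w}/n\le\mu(w)+\varepsilon+\delta$, and Lemma~\ref{lem:limtrick} upgrades the one-sided bound to the limit. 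Your sketch also worries about ``controlling the distribution of those long blocks without circularly assuming aligned normality at the larger length,'' but the non-circular resolution is precisely this measure-theoretic typicality statement (a law of large numbers for the Markov measure), not a further counting identity; without it the residue classes $A_r(N)$ need not split equally and the argument does not close.
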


Before proving the theorem, we state two very simple but useful lemmas.
The first lemma states that obtaining a proper upper or lower bound for
asymptotic frequencies of all words of a given length is sufficient to
prove that limiting frequencies will follow the expected measure.  The
proof follows directly from the equality $\sum_{w \in A^{\ell}} \mu(w) = 1$
for each integer $\ell \ge 0$.

\begin{lemma} \label{lem:limtrick} 
  Let $\mu$ be a probability measure and $\ell$ a fixed non-negative
  integer. 
  For each sequence $x \in A^\omega$, the following three statements are
  equivalent.
  \begin{enumerate}[label=(\arabic*),ref=(\arabic*)]
  \item $\lim_{n \to \infty} \alocc{x[1..n\ell]}{w}/n = \mu(w)$ for each
    $w \in A^{\ell}$. 
  \item $\limsup_{n \to \infty} \alocc{x[1..n\ell]}{w}/n \le \mu(w)$ for each
    $w \in A^{\ell}$. 
  \item $\liminf_{n \to \infty} \alocc{x[1..n\ell]}{w}/n \ge \mu(w)$ for each
    $w \in A^{\ell}$. 
  \end{enumerate}
\end{lemma}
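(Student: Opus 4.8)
The plan is to exploit the single counting identity that ties together all the aligned‑occurrence counts of blocks of the fixed length~$\ell$. For every~$n$, the prefix $x[1..n\ell]$ splits uniquely into the $n$ consecutive aligned blocks $x[1..\ell],\ x[\ell+1..2\ell],\ \dots,\ x[(n-1)\ell+1..n\ell]$, each of which is exactly one element of~$A^\ell$, so
\begin{displaymath}
  \sum_{w \in A^\ell} \alocc{x[1..n\ell]}{w} = n,
  \qquad\text{hence}\qquad
  \sum_{w \in A^\ell} \frac{\alocc{x[1..n\ell]}{w}}{n} = 1 .
\end{displaymath}
On the measure side, $\sum_{w \in A^\ell}\mu(w) = 1$ since $\mu$ is a probability measure. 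The implications $(1)\Rightarrow(2)$ and $(1)\Rightarrow(3)$ are immediate, so only $(2)\Rightarrow(1)$ and $(3)\Rightarrow(1)$ need an argument, and by the symmetry of the situation I will write out only $(2)\Rightarrow(1)$.

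Assume~(2). Fix a block $w_0 \in A^\ell$. Writing
$\alocc{x[1..n\ell]}{w_0}/n = 1 - \sum_{w \ne w_0} \alocc{x[1..n\ell]}{w}/n$
and passing to the $\liminf$, I use that $A^\ell$ is finite, so the $\limsup$ of the finite sum is at most the sum of the $\limsup$'s, together with $\liminf(-a_n) = -\limsup(a_n)$, to get
\begin{displaymath}
  \liminf_{n\to\infty} \frac{\alocc{x[1..n\ell]}{w_0}}{n}
  \ge 1 - \sum_{w \ne w_0} \limsup_{n\to\infty}\frac{\alocc{x[1..n\ell]}{w}}{n}
  \ge 1 - \sum_{w \ne w_0} \mu(w) = \mu(w_0).
\end{displaymath}
Combined with the hypothesis $\limsup_{n\to\infty}\alocc{x[1..n\ell]}{w_0}/n \le \mu(w_0)$, this forces the limit to exist and to equal $\mu(w_0)$. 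As $w_0$ was arbitrary, (1) follows; the implication $(3)\Rightarrow(1)$ is obtained in the same way with the roles of $\limsup$ and $\liminf$ exchanged.

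There is no genuine obstacle here. The only point deserving a word of care is that interchanging $\limsup$ with a sum is legitimate precisely because the sum ranges over the finite set~$A^\ell$, and that the argument uses nothing about~$\mu$ beyond $\sum_{w\in A^\ell}\mu(w)=1$. Note that no analogous normalization is available for the unaligned counts $\occ{x[1..n]}{w}$, which is exactly why the unaligned statement~\ref{nali} of Theorem~\ref{thm:equivdefs} genuinely requires a separate argument rather than following from this lemma.
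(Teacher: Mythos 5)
Your proof is correct and is essentially the argument the paper has in mind: the paper dispatches this lemma with the single remark that it ``follows directly from the equality $\sum_{w \in A^{\ell}} \mu(w) = 1$,'' and your write-up simply spells out that argument, pairing this with the normalization $\sum_{w \in A^{\ell}} \alocc{x[1..n\ell]}{w} = n$ and the finiteness of $A^{\ell}$ to justify exchanging $\limsup$ (resp.\ $\liminf$) with the sum. Nothing further is needed.
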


The next lemma states that it is sufficient to look at lengths which are
multiples of a fixed integer~$k$.  The proof follows easily from the
observation that there are at most $k$ occurrences of~$w$ starting between
positions $kn$ and $k(n+1)$.

\begin{lemma} \label{lem:multiples}
  Let $k \in \Nat$ be a fixed positive integer. For each sequence~$x$ and each $\ell \in \Nat$ and each finite word~$w \in A^\ell$, the following three statements hold.
  \begin{enumerate}[label=(\arabic*),ref=(\arabic*)]
  \item $\liminf_{n \to \infty} \alocc{x[1..n\ell]}{w}/n = \liminf_{n \to \infty}
    \alocc{x[1..nk\ell]}{w}/(nk)$ 
  \item $\limsup_{n \to \infty} \alocc{x[1..n\ell]}{w}/n = \limsup_{n \to \infty}
    \alocc{x[1..nk\ell]}{w}/(nk)$ 
  \item $\lim_{n \to \infty} \alocc{x[1..n\ell]}{w}/n = \lim_{n \to \infty}
    \alocc{x[1..nk\ell]}{w}/(nk)$ if such limits exist.
    \label{lim}
  \end{enumerate}
\end{lemma}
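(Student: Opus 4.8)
The plan is to reduce everything to an elementary fact about the non-decreasing counting function $f(n) := \alocc{x[1..n\ell]}{w}$. First I would record the two structural properties of $f$: it is non-decreasing in~$n$, and $f(n+1) \le f(n)+1$, because $x[1..(n+1)\ell]$ contains exactly one aligned block of length~$\ell$ more than $x[1..n\ell]$, namely $x[n\ell+1..(n+1)\ell]$; this is precisely the ``at most $k$ new occurrences'' observation, iterated. In particular $0 \le f(n)/n \le 1$ for every $n \ge 1$, and for every $n \ge 1$ and every $N$ with $nk \le N \le (n+1)k$ we obtain the sandwich
\begin{displaymath}
  f(nk) \le f(N) \le f((n+1)k).
\end{displaymath}

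Next, given $N \ge k$, I would set $n = \lfloor N/k \rfloor$, so that $nk \le N < (n+1)k$ and $n \to \infty$ as $N \to \infty$. Dividing the sandwich by~$N$ and bounding $N$ from above by $(n+1)k$ in the left term and from below by $nk$ in the right term yields
\begin{displaymath}
  \frac{n}{n+1}\cdot\frac{f(nk)}{nk} \le \frac{f(N)}{N} \le \frac{n+1}{n}\cdot\frac{f((n+1)k)}{(n+1)k}.
\end{displaymath}
Since the correcting factors $n/(n+1)$ and $(n+1)/n$ tend to~$1$ and the quantities $f(m)/m$ stay in $[0,1]$, taking $\liminf_{N\to\infty}$ in the left inequality gives $\liminf_N f(N)/N \ge \liminf_n f(nk)/(nk)$, and taking $\limsup_{N\to\infty}$ in the right inequality gives $\limsup_N f(N)/N \le \limsup_n f(nk)/(nk)$. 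For the reverse inequalities I would simply note that $\bigl(f(nk)/(nk)\bigr)_n$ is a subsequence of $\bigl(f(N)/N\bigr)_N$, hence $\liminf_n \ge \liminf_N$ and $\limsup_n \le \limsup_N$. Combining these proves items~(1) and~(2), and item~\ref{lim} then follows immediately, since a limit exists exactly when $\liminf = \limsup$, in which case both expressions in item~\ref{lim} equal this common value.

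The only slightly delicate point, and it is minor, is the passage from the $\liminf$ (resp.\ $\limsup$) of the products $\tfrac{n}{n+1}\cdot\tfrac{f(nk)}{nk}$ to that of $\tfrac{f(nk)}{nk}$ alone; I would justify it by the standard fact that multiplying a bounded sequence by a sequence converging to~$1$ alters neither its $\liminf$ nor its $\limsup$. Everything else is routine bookkeeping, and the same argument applies verbatim with $x$ replaced by any shift $\sigma^j(x)$, so no separate treatment of offsets is needed.
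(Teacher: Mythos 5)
Your proof is correct and follows essentially the same route as the paper, which only sketches the argument via the observation that at most $k$ aligned occurrences of $w$ are added between consecutive multiples of $k$; your monotonicity-plus-sandwich argument with correcting factors $n/(n+1)$ and $(n+1)/n$ is just a careful write-up of that squeeze. The subsequence remark for the reverse inequalities and the bounded-times-convergent-to-one fact are exactly the routine details the paper omits.
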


Lemma~\ref{lem:limtrick} and Lemma~\ref{lem:multiples} are stated for aligned occurrences frequencies but they are also valid for occurrences frequencies.

\begin{proof}[Proof of Theorem~\ref{thm:equivdefs}]
  The equivalence between the three definitions of normality is proved as
  follows.  We successively show that \ref{ali} implies \ref{sali},
  \ref{sali} implies \ref{nali} and that \ref{nali} implies \ref{ali}.

  \paragraph{\ref{ali} implies \ref{sali}}
  It is sufficient to prove that if $x$ presents aligned normality then
  $\sigma(x)$ also presents aligned normality.

  For $w \in A^{\ell}$, $k \ge \ell$ and $1 \le i \le k - \ell + 1$ we
  define $B(k,w,i)$ as the set of words of length~$k$ which contains an
  occurrence of~$w$ at position~$i$, that is
  $B = \{ v \in A^k : v[i..i + |w| - 1] = w \}$.  Since the Markov
  measure~$\mu$ is invariant $\mu(B(k,w,i)) = \mu(w)$ for any $|w| \le k$
  and $1 \le i \le k-|w|+1$.

  For any $w \in A^{\ell}$ and $r \in \Nat$.
  \begin{align*}
    \liminf_{n \to \infty} \frac{\alocc{\sigma(x)[1..n \ell]}{w}}{n}
    & = \liminf_{n \to \infty} \frac{\alocc{\sigma(x)[1..nr\ell]}{w}}{nr} \\ 
    & \ge \liminf_{n \to \infty} \frac{1}{r}\ \sum_{k=0}^{r-2}
      \;\sum_{v \in B(r \ell, w, 2 + \ell k)}
      \frac{\alocc{x[1..n r \ell]}{v}}{n} \\
    & = \frac{1}{r}\sum_{k = 0}^{r-2}\sum_{v \in B(\ell r, w, 2+\ell k)}{\mu(v)} \\
    & = \frac{r-1}{r} \mu(w)
  \end{align*}

  Since this inequality holds for any $r \in \Nat$.
  \begin{displaymath}
   \liminf_{n \to \infty} \frac{\alocc{\sigma(x)[1..n \ell]}{w}}{n} 
   \ge \mu(w) 
  \end{displaymath}
  and we conclude by Lemma~\ref{lem:limtrick}.

  \paragraph{\ref{sali} implies \ref{nali}}
  Notice that for any $w \in A^{\ell}$,
  \begin{displaymath}
    \occ{x[1..n]}{w} = 
    \sum_{i = 0}^{\ell-1} \alocc{\sigma^i(x)[1..n-i]}{w}
  \end{displaymath}

  then
  \begin{align*}
    \lim_{n \to \infty} \frac{\occ{x[1..n]}{w}}{n} & = 
    \sum_{i = 0}^{\ell-1} \lim_{n \to \infty} 
                        \frac{\alocc{\sigma^i(x)[1..n-i]}{w}}{n} \\
    & = \sum_{i = 0}^{\ell-1} \mu(w) / \ell = \mu(w) 
  \end{align*}

  \paragraph{\ref{nali} implies \ref{ali}}
  Let $w$ be a finite word of length~$\ell$.  For each word~$v$, we define
  $\alocc{v}{w,*} = \max_{i=1}^{\ell} \alocc{v}{w,i}$.  And, for a given
  $\varepsilon > 0$ and $k \in \Nat$, we define a set
  $\bad(w,k,\varepsilon)$ of words of length~$k\ell-1$ where the frequency
  of aligned occurrences of~$w$ is bad:
  \begin{displaymath}
    \bad(w,k,\varepsilon) = 
    \{v \in A^{k\ell-1} : \alocc{v}{w,*} > (k-1)(\mu(w) + \varepsilon) \}.
  \end{displaymath}

  By the ergodic theorem for irreducible Markov chains
  \cite[Thm.~1.10.2]{Norris98}, for each positive real numbers
  $\delta,\varepsilon > 0$, there exists $k_0$ such that for any
  $k \ge k_0$,
  \begin{displaymath}
    \mu(\bad(w,k,\varepsilon)) < \delta.
  \end{displaymath}

  Now, for any such $k \ge k_0$,
  \begin{align*}
    \limsup_{n \to \infty} \frac{\alocc{x[1..n\ell]}{w}}{n} 
    & = \limsup_{n \to \infty} \frac{\alocc{x[(k-1)\ell+1..n\ell]}{w}}{n} \\
    & \le \limsup_{n \to \infty} \frac{1}{n(k-1)\ell} \sum_{t = 1}^{(n-1)\ell+1} 
         \alocc{x[t..t + k\ell-2]}{w,2-t} \\
    & \le \limsup_{n \to \infty} \frac{1}{n(k-1)\ell} \sum_{t = 1}^{(n-1)\ell+1} 
         \alocc{x[t..t + k\ell-2]}{w,*} \\
    & = \limsup_{n \to \infty} \sum_{v \in A^{k\ell-1}} 
          \frac{\occ{x[1..(n+k-1)\ell -1]}{v}}{n\ell}
          \frac{\alocc{v}{w,*}}{k-1} \\
    & \le \sum_{v \in A^{k\ell-1}} \left( \limsup_{n \to \infty} 
          \frac{\occ{x[1..(n+k-1)\ell - 1]}{v}}{n\ell}\right)
          \frac{\alocc{v}{w,*}}{k-1} \\
    & = \sum_{v \in A^{k\ell -1}} \left( \limsup_{n \to \infty} 
          \frac{\occ{x[1..n\ell]}{v}}{n\ell}\right)
          \frac{\alocc{v}{w,*}}{k-1} \\
    & = \sum_{v \in A^{k\ell-1}} \mu(v)\ \frac{\alocc{v}{w,*}}{k-1} \\
    & = \sum_{v \in A^{k\ell-1} \setminus \bad(w,k,\epsilon)} \mu(v)
          \frac{\alocc{v}{w,*}}{k-1} 
          + \sum_{v \in \bad(w,k,\epsilon)} \mu(v)\frac{\alocc{v}{w,*}}{k-1} \\
    & \le (\mu(w)+\varepsilon)
          \sum_{v \in A^{k\ell-1} \setminus \bad(w,k,\epsilon)} \mu(v) + 
          \sum_{v \in A^{k\ell-1} \setminus \bad(w,k,\epsilon)} \mu(v) \\
    & \le \mu(w) + \varepsilon + \delta 
  \end{align*}

  The inequality on the second line comes from the fact that every aligned
  occurrence of $w$ in a position $j\ell + 1$ with $k-1 \le j < n$ is
  counted $(k-1)\ell$ times as $\alocc{x[t..t + k\ell -2]}{w,2-t}$ for
  $(j+1-k)\ell + 2 \le t \le j\ell + 1$.  This technique is due to Cassels
  \cite{Cassels52}.  Since the last inequality is true for any
  $\delta, \varepsilon > 0$, it follows that
  $\limsup_{n \to \infty}{\alocc{x[1..n\ell]}{w}/n} \le \mu(w)$ and we
  conclude by Lemma~\ref{lem:limtrick}.
\end{proof}

\section{Finite-state compressibility} \label{sec:compress}

In this section, we introduce the automata with output also known as
transducers which are used to characterize normality by incompressibility.
We consider \emph{non-deterministic transducers} computing functions from
sequences in a shift~$X$ to sequences in a shift~$Y$, that is, for a given
input sequence $x \in X$, there is at most one output sequence $y \in Y$.
We focus on transducer that operate in real-time, that is, they process
exactly one input alphabet symbol per transition.  We start with the
definition of a transducer.

\begin{definition}
  A \emph{non-deterministic transducer} is a tuple $\mathcal{T} =
  \tuple{Q,A,B,\delta,I,F}$, where
  \begin{itemize}
  \item $Q$ is a finite set of \emph{states},
  \item $A$ and $B$ are the input and output alphabets, respectively,
  \item $\delta \subset Q \times A \times B^* \times Q$ is a finite
    \emph{transition} relation,
  \item $I \subseteq Q$ and $F \subseteq Q$ are the sets of \emph{initial}
    and \emph{final} states, respectively.
  \end{itemize}
\end{definition}

A transition of such a transducer is a tuple $\tuple{p,a,v,q}$ in $Q \times
A \times B^* \times Q$ which is written $p \trans{a|v} q$.  A finite
(respectively infinite) \emph{run} is a finite (respectively infinite)
sequence of consecutive transitions,
\begin{displaymath}
  q_0 \trans{a_1|v_1} q_1 \trans{a_2|v_2} q_2
  \cdots q_{n-1} \trans{a_n|v_n} q_n
  \quad(\text{resp.}\quad
  q_0 \trans{a_1|v_1} q_1 \trans{a_2|v_2} q_2
 \trans{a_3|v_3} q_3  \cdots).
\end{displaymath}
Its \emph{input and output labels} are $a_1\cdots a_n$ and $v_1 \cdots v_n$ 
respectively.  A finite run is written $q_0 \trans{a_1\cdots a_n|v_1 \cdots
  v_n} q_n$.  An infinite run is \emph{final} if the state~$q_n$ is final
for infinitely many integers~$n$.  In that case, the infinite run is
written $q_0 \trans{a_1a_2a_3\cdots|v_1v_2v_3\cdots} \limrun$.  An infinite
run is accepting if it is final and furthermore its first state~$q_0$ is
initial.  This is the classical B\"uchi acceptance condition
\cite{PerrinPin04}.  We always assume that for each sequence~$x$, there is
at most one sequence~$y$ such that there is an accepting run $q_0
\trans{x|y} \limrun$ and we write $y = \mathcal{T}(x)$.  In that case, it
can be assumed that there is exactly one accepting run with input
label~$x$.  By a slight abuse of notation, we write $\mathcal{T}(x[m..n])$
for the output of~$\mathcal{T}$ along that run while reading the factor
$x[m..n]$.  We always assume that all transducers are \emph{trim}: each
state can occur in an accepting run.

A transducer $\mathcal{T}$ is called \emph{bounded-to-one} (resp.,
\emph{one-to-one}) if there is a constant~$K$ such that for each
sequence~$y$ the set $\mathcal{T}^{-1}(y) = \{x : \mathcal{T}(x) = y\}$
has cardinality a most~$K$ (resp., at most $1$).  We call here
\emph{compressor} a bounded-to-one transducer. In the literature,
\emph{lossless} deterministic transducers are often considered. As it was
shown in \cite[Prop~2.1]{BecherCartonHeiber15}, this is an intermediate
notion between one-to-one and bounded-to-one.  We prefer not to use this
notion as it is a structural property of the transducer and not of the
function it realizes.

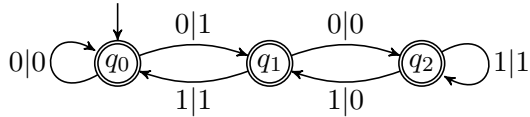
\begin{figure}[htbp]
  \begin{center}
  \begin{tikzpicture}[->,>=stealth',initial text=,semithick,auto,inner sep=1.5pt]
    \tikzstyle{every state}=[minimum size=0.4]
    \node[state,initial above,accepting] (q0) at (0,0) {$q_0$};
    \node[state,accepting]  (q1) at (2,0) {$q_1$};
    \node[state,accepting]  (q2) at (4,0) {$q_2$};
    \path (q0) edge[out=210,in=150,loop] node {$0|0$} ();
    \path (q0) edge[bend left=20] node {$0|1$} (q1);
    \path (q1) edge[bend left=20] node {$1|1$} (q0);
    \path (q1) edge[bend left=20] node {$0|0$} (q2);
    \path (q2) edge[out=30,in=-30,loop] node {$1|1$} ();
    \path (q2) edge[bend left=20] node {$1|0$} (q1);
  \end{tikzpicture}
  \end{center}
  \caption{A transducer for the multiplication by~$3$ in base~$2$}
  \label{fig:multby3}
\end{figure}
The transducer pictured in Figure~\ref{fig:multby3} is non-deterministic.
It realizes multiplication by~$3$ on binary expansions of real numbers.  If
the input~$x$ is the binary expansion of some real number $\alpha < 1/3$,
then the output is the binary expansion of~$3\alpha$.

The \emph{compression ratio} $\rho_{\mathcal{C}}(x)$ of a
compressor~$\mathcal{C}$ on a sequence~$x$ is
\begin{displaymath}
 \rho_{\mathcal{C}}(x) = \liminf_{n\to\infty}{\frac{|\mathcal{C}(x[1..n])|}{n}}.
\end{displaymath}
A sequence~$x$ of a shift~$X$ is called \emph{compressible in~$X$} if there
is a compressor $\mathcal{C}: X \to X$ such that
$\rho_{\mathcal{C}}(x) < 1$.

\section{Main Result} \label{sec:main}

It follows from the results in~\cite{SchnorrStimm71,Dai04} that the words
$x$ with compression ratio $\rho(x)$ equal to~$1$ are exactly the normal
words in the full shift.  A direct proof of this result appears
in~\cite{BecherHeiber13}. Extensions of this characterization for
non-determinism and extra memory appear
in~\cite{BecherCartonHeiber15,CartonHeiber15}.  The following theorem
extends this result to the context of shifts of finite type.

\begin{theorem} \label{thm:main}
  Let $X$ be an irreducible shift of finite type and $x$ a sequence in~$X$.
  The sequence~$x$ is normal in~$X$ if and only it is incompressible
  in~$X$.
\end{theorem}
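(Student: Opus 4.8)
\textbf{Overview.} The plan is to prove the two implications separately, following the classical template (Schnorr--Stimm, Dai \textit{et al.}) but carefully adapting all frequency counting to the Parry measure $\mu^X$ rather than the uniform measure. The ``normal $\Rightarrow$ incompressible'' direction is an information-theoretic lower bound argument; the ``not normal $\Rightarrow$ compressible'' direction is a construction. I expect the second direction to be the main obstacle, because the compressor must be a transducer \emph{from $X$ to $X$} (not to a full shift), so the output coding has to respect the SFT constraint encoded by the matrix $M$.

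\textbf{Normal $\Rightarrow$ incompressible.} Suppose $x$ is normal in $X$ and let $\mathcal{C}\colon X\to X$ be any compressor, with constant $K$ bounding the fibres. Fix a large block length $\ell$. Along the unique accepting run of $\mathcal{C}$ on $x$, cut $x$ into consecutive blocks of length $\ell$; the run visits a sequence of states, and one groups the block-occurrences by the state at which the block begins. For each state $q$ and each block $w\in\blocks{\ell}{X}$, normality (in the strong aligned form, Theorem~\ref{thm:equivdefs}\ref{sali}, applied to each residue class) controls the frequency with which $w$ is read starting in state $q$. A standard counting inequality — the output length is at least the number of distinct input blocks read from a given state, up to the $\log K$ slack from bounded-to-one — combined with the fact that the per-state conditional block distribution converges, forces
\begin{displaymath}
  \liminf_{n\to\infty}\frac{|\mathcal{C}(x[1..n])|}{n} \;\ge\; \frac{1}{\ell}\sum_{w\in\blocks{\ell}{X}} -\mu^X(w)\log\mu^X(w) \;-\; o_\ell(1),
\end{displaymath}
and letting $\ell\to\infty$ the right-hand side tends to $h(X)$. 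Since $\mathcal C$ maps into $X$ whose block count grows like $2^{h(X)n}$, the output cannot be shorter than the input in the $\liminf$ sense, i.e. $\rho_{\mathcal C}(x)\ge 1$. The technical care here is the Markov (rather than Bernoulli) structure: the conditional entropy bookkeeping must use $h(\mu^X)=h(X)$ from the formula for $h(\mu_{\pi,P})$, and the ``state of the transducer'' and ``state of the SFT'' have to be tracked jointly.

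\textbf{Not normal $\Rightarrow$ compressible.} If $x$ is not normal, Theorem~\ref{thm:equivdefs} gives a block length $\ell$ and a word $w\in\blocks{\ell}{X}$ whose aligned frequency along $x$ does not converge to $\mu^X(w)$; passing to a subsequence, it exceeds $\mu^X(w)+\varepsilon$ (or is below $\mu^X(w)-\varepsilon$) infinitely often. The plan is to build a transducer that reads $x$ in aligned blocks of length $\ell$ and re-encodes each block by an arithmetic-style code adapted to $\mu^X$: a block $v$ that can legally follow the current SFT-state gets a codeword of length roughly $-\log(\text{conditional Parry probability of }v)$, realised as a path in $X$ of that length from the current output SFT-state. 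Because the empirical block distribution of $x$ is, along the chosen subsequence, bounded away from $\mu^X$ in relative entropy by a positive amount, the expected code length per block is strictly below $\ell\cdot h(X)$, giving $\rho_{\mathcal C}(x)<1$. The two genuine obstacles are: (i) making the code \emph{prefix-free and decodable} by a finite-state machine while its output ranges over $X$ — one fixes an aperiodic power so that from every SFT-state every sufficiently long target block is reachable, buying enough ``room'' to embed a Shannon–Fano code into legal $X$-paths and to pad/synchronise; and (ii) ensuring one-to-one-ness (hence a valid compressor) by interleaving the codeword with a bounded amount of side information identifying the block boundary and the state, using irreducibility to route back. Handling the residual $-\varepsilon$ case (frequency too \emph{low}) is symmetric: some other block of length $\ell$ must then be over-represented, since $\sum_v \mu^X(v)=1$, so one reduces to the previous case. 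I expect step (i) — embedding a near-optimal variable-length code whose outputs are constrained to lie in $X_M$, realised by a finite transducer — to be the crux of the whole proof.
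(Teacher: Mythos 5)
Your first direction (normal $\Rightarrow$ incompressible) is in substance the paper's argument: an entropy lower bound on the output length obtained from a Kraft-type counting over states of the bounded-to-one transducer (the paper packages this as Lemma~\ref{lem:kraft} and Lemma~\ref{lem:lempel}), followed by a conversion between ``output in $X$'' and ``output in the binary full shift'' (the paper isolates this as Proposition~\ref{pro:coding} and Corollary~\ref{cor:coding}; your remark that $|\blocks{n}{X}|$ grows like $2^{h(X)n}$ plays that role). That half is fine, if sketchy.

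The second direction contains a genuine error. You encode each $\ell$-block $v$ by a codeword of length about $-\log$ of its (conditional) Parry probability and claim that, because the empirical block distribution $Q$ of $x$ stays at positive relative entropy from $\mu^X$, the expected code length per block drops strictly below $\ell h(X)$. The expected length is the cross-entropy $\sum_v Q(v)\bigl(-\log \mu^X(v)\bigr) = H(Q) + D(Q\|\mu^X)$, so the divergence enters with the wrong sign: deviation from $\mu^X$ makes this code \emph{longer}, never shorter. Worse, for the Parry measure all allowed blocks of a given length have essentially the same weight, $\mu^X(v) = l_{v_1} r_{v_\ell}\,\lambda^{-(\ell-1)}$, hence $-\log\mu^X(v) = \ell h(X) + O(1)$ uniformly in $v$; your transducer therefore emits about $\ell h(X)$ bits per block whatever $x$ is, its ratio is $1$, and no failure of normality can be exploited this way. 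The paper instead codes against an \emph{empirical} measure: it blocks $x$ into a sequence $y$ over $\blocks{\ell}{X}$, extracts a subsequence along which unigram and digram frequencies of $y$ converge, forms the empirical Markov data $(\pi,P)$ and the measure $\nu$, assigns each $k$-block $u$ of $y$ a codeword of length $\lceil -\log\nu(u)\rceil$, and uses uniqueness of the measure of maximal entropy to get $h(\mu_{\pi,P}) < h(Y) = \ell h(X)$; the gain is $h(Y) - h(\mu_{\pi,P}) > 0$, not a relative-entropy term against $\mu^X$. Your anticipated crux (i), embedding the code into legal $X$-paths, is also a detour: by Corollary~\ref{cor:coding} it suffices to compress into $\{0,1\}^\omega$ below rate $h(X)$ and recode into $X$ afterwards.
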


The following proposition is a very classical result showing that elements
of a given shift can be encoded in another shift with a compression ratio
close to the ratio of their entropies.  To simplify the proof, we assume
that the shifts are aperiodic but this is not really necessary.  The result
can also be extended to sofic shifts.
\begin{proposition} \label{pro:coding} 
  Let $X$ and $Y$ be two irreducible and aperiodic shifts of finite type.
  For each real number $\varepsilon > 0$, there is a one-to-one
  transducer~$\mathcal{T}$ from~$X$ into~$Y$ such that for each $x \in X$,
  \begin{displaymath}
    \limsup_{n \to \infty} \frac{|\mathcal{T}(x[1..n]|)}{n} <
    \frac{h(X)}{h(Y)} + \varepsilon.
  \end{displaymath}
\end{proposition}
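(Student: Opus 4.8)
The plan is to realize the transducer as the composition of two block codes, mediated by a common intermediate alphabet. First I would fix a large block length $N$ (to be chosen at the end in terms of $\varepsilon$), and consider the words $\blocks{N}{X}$. Since $h(X)=\lim_n \frac{1}{n}\log|\blocks{n}{X}|$, for $N$ large enough we have $\log|\blocks{N}{X}| < N(h(X)+\varepsilon')$ for a suitable small $\varepsilon'$. Thus the blocks of length $N$ in $X$ can be injectively encoded by binary strings of a fixed length $p = \lceil N(h(X)+\varepsilon')\rceil$. Symmetrically, because $Y$ is irreducible and aperiodic, $|\blocks{m}{Y}|$ grows like $2^{mh(Y)}$, so for a suitable block length $M$ there are at least $2^p$ distinct words in $\blocks{M}{Y}$ that, crucially, can be chosen to be \emph{mutually freely concatenable}, i.e.\ any sequence of them is still a legal block of $Y$. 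Aperiodicity is what buys this: by aperiodicity there is a fixed $n_0$ with $M^{n_0}_{ab}>0$ for all $a,b$, so one can designate a fixed connecting word $u$ of length $n_0$ and use blocks of the form (chosen word)$\cdot u$; the overhead $n_0/M$ is absorbed into $\varepsilon$ by taking $M$ large. The ratio of output length to input length is then roughly $(M+n_0)/N \approx p/(N h(Y)) \approx (h(X)+\varepsilon')/h(Y)$, which is below $h(X)/h(Y)+\varepsilon$ for appropriate choices.

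Next I would actually build the transducer $\mathcal{T}$. It reads the input $x\in X$ in blocks of length $N$: a finite-state control remembers the current block of $N$ symbols read so far (finitely many states, indexed by prefixes of blocks in $\blocks{N}{X}$), emits nothing until a block boundary, and at each boundary outputs the corresponding $Y$-block of length $M+n_0$ according to the fixed injective table. The B\"uchi condition is handled by making all (or cofinally many) states final. Injectivity of $\mathcal{T}$ follows because the output, read in blocks of length $M+n_0$ aligned from the start, decodes uniquely block-by-block back to $\blocks{N}{X}$ and hence to $x$; I need to argue that the decoder can find the block boundaries, which it can because they are at fixed positions (multiples of $M+n_0$), and that distinct inputs give distinct outputs, which holds since the first point of difference occurs inside some input block and the table is injective. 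One subtlety: the first block. If $x$ starts in a state from which the chosen $N$-blocks are not all reachable, one may need a short ``priming'' phase; but since $X$ is irreducible (indeed we may assume aperiodic) every state lies on every sufficiently long block, so after discarding a bounded prefix — or better, by choosing the block table to depend on the starting symbol, of which there are finitely many — this is handled without affecting the asymptotic ratio.

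Then the bound on $\limsup_n |\mathcal{T}(x[1..n])|/n$ is a direct computation: after reading $n$ symbols, the transducer has completed $\lfloor n/N\rfloor$ blocks and output $\lfloor n/N\rfloor\cdot(M+n_0)$ symbols (plus a bounded remainder), so the ratio tends to $(M+n_0)/N$, which by the choices above is $<h(X)/h(Y)+\varepsilon$. I would organize the quantifier chase at the end: given $\varepsilon$, pick $\varepsilon'$ small, then $N$ large so the entropy estimate for $X$ holds and $n_0/N$ is small, then $M$ large so the entropy estimate for $Y$ holds and the connecting-word overhead is negligible, finally verify the arithmetic.

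The main obstacle I expect is the \emph{existence of enough freely concatenable $Y$-blocks}, i.e.\ making the output side genuinely a transducer into $Y$ (not merely producing words whose individual blocks are legal). This is where aperiodicity and the Perron–Frobenius growth rate of $|\blocks{m}{Y}|$ must be combined carefully: one needs simultaneously (a) at least $2^p$ blocks, and (b) a uniform bridging word so concatenations stay in $Y$, and (c) the length $M+n_0$ close enough to $p/h(Y)$. A clean way to get all three is to count, for large $M$, the number of blocks of length $M$ that both begin and end at a fixed symbol $a$ with $M^{\,k}_{aa}$-type estimates; these self-concatenate freely, and by Perron–Frobenius their count is still $2^{M h(Y)(1-o(1))}$, so (a) and (c) follow while (b) is automatic with $n_0 = 0$. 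Everything else in the argument is routine bookkeeping about block codes and finite-state control.
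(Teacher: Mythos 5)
Your overall architecture is the same as the paper's (block coding: count blocks of length $N$ in $X$ via $h(X)$, inject them into a family of $Y$-blocks whose count is controlled by Perron--Frobenius, read the input in blocks and output the coded blocks, then do the ratio arithmetic). But the one step you yourself single out as the crux --- making the concatenated output genuinely a point of $Y$ --- is handled incorrectly in both of your formulations. In the ``clean way'', blocks of length $M$ that begin and end at a fixed symbol $a$ do \emph{not} self-concatenate freely with $n_0=0$: juxtaposing two of them creates the factor $aa$ at the seam, and $aa$ may be forbidden. This already fails in the golden mean shift with $a=1$, and there are irreducible aperiodic SFTs where $aa$ is forbidden for \emph{every} symbol (e.g.\ alphabet $\{1,2,3\}$ with allowed pairs $12,23,31,32$), so you cannot even repair it by choosing $a$ better. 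Your earlier variant with a single fixed connecting word $u$ of length $n_0$ is also unjustified: aperiodicity gives, for every pair $(b,c)$, \emph{some} word of length $n_0$ joining $b$ to $c$, but that word depends on the pair; in the same three-symbol example the symbols have no common successor, so no single $u$ is compatible with all boundary pairs.

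The repair is small and is exactly the paper's device: choose the designated $Y$-words of the form $awa$ (i.e.\ count $N^{m}_{aa}\ge d\mu^{m}$ for the fixed symbol $a$), and have the transducer output $aw$ for each input block, so that consecutive outputs \emph{share} the symbol $a$ rather than abutting two copies of it; since forbidden blocks have length $2$, every length-$2$ factor of $a w_1 a w_2 a\cdots$ lies inside some $a w_i a$ and the output is in $Y$. (Alternatively, keep bridges but let the bridge depend on the boundary symbols, which the finite-state control knows.) With that fix your counting and the ratio computation go through and coincide with the paper's, which picks a rational $p/q$ with $h(X)/h(Y)<p/q<h(X)/h(Y)+\varepsilon$, compares $|\blocks{qn+1}{X}|\le c\lambda^{qn}$ with $N^{pn}_{aa}\ge d\mu^{pn}$, and gets the ratio $pn/(qn+1)$. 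One further remark: your ``priming'' worry about the first input block is a non-issue, since the injection is defined on all of $\blocks{N}{X}$ and no reachability assumption on which input blocks can occur is needed.
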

\begin{proof}
  Suppose that $X$ and $Y$ are the subshifts $X = X_M$ and $Y = X_N$
  for the two $\{0, 1\}$ matrices $M$ and~$N$.  Let $\lambda$ and~$\mu$ be
  respectively the eigenvalues of greatest modulus of $M$ and~$N$, so that
  $h(X) = \log \lambda$ and $h(Y) = \log \mu$. Let $p/q$ be a rational
  number such that $h(X)/h(Y) < p/q < h(X)/h(Y) + \varepsilon$.
  
  Since the sequence $M^n/\lambda^n$ converges to some matrix, there is a
  constant~$c$ such that $\sum_{a,b \in A}{M^n_{a,b}} \le c\lambda^n$ for
  each integer $n \ge 0$.  Similarly, there is another constant~$d$, a
  symbol~$a$ and an integer~$n_0$ such that $N^n_{a,a} \ge d\mu^n$ for each
  $n \ge n_0$.  Since $q\log \lambda < p\log \mu$, there is an
  integer~$n_1$ such that $c\lambda^{qn} < d\mu^{pn}$ for each $n \ge n_1$.
  Let us recall that $M^n_{a,b}$ is the number of words~$w$ of length~$n-1$
  such that $awb$ is a block of~$X$ and the sum
  $\sum_{a,b \in A}{M^n_{a,b}}$ is thus the number of blocks of length
  $n+1$ in~$X$.

  From the previous inequalities, it follows that, for
  $n \ge \max(n_0,n_1)$, the number $|\blocks{qn+1}{X}|$ of blocks of
  length~$qn+1$ in~$X$ is less than the number of words~$w$ of length~$pn-1$
  such that $awa$ is a block of~$Y$. Let us choose an integer~$n$ such that
  $n \ge \max(n_0,n_1)$.  Let
  $f$ be a one-to-one function which maps each block~$u$ of length~$qn+1$
  of~$X$ to a word $w = f(u)$ such that $awa$ is a block of~$Y$.  The
  transducer~$\mathcal{T}$ reads each word $x \in X$ by blocks of
  length~$qn+1$. For each read block~$u$ of length~$qn+1$, it outputs $aw$
  where $w = f(u)$.  Since $f$ is one-to-one, the function realized
  by~$\mathcal{T}$ is also one-to-one.  Furthermore, 
  $\lim_{N \to \infty} |\mathcal{T}(x[1..N]|)/N = pn/(qn+1) < h(X)/h(Y) + \varepsilon$.
\end{proof}

The following corollary allows us to work with transducers from a specific
shift into the full-shift and adapt those results to the case where the
transducer has the same shift as domain and image.
\begin{corollary} \label{cor:coding}
  There is a compressor $\mathcal{C}: X \to X$ such that
  $\rho_{\mathcal{C}}(x) < 1$ if and only if there is a compressor
  $\mathcal{C}': X \to 2^\omega$ such that $\rho_{\mathcal{C}'}(x) < h(X)$.
\end{corollary}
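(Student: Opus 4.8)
The plan is to prove both implications by composing the transducer at hand with the coding transducers furnished by Proposition~\ref{pro:coding}, applied to the pair of shifts $X$ and the full shift $2^\omega$. Note $h(2^\omega) = \log 2 = 1$, so for every $\varepsilon > 0$ Proposition~\ref{pro:coding} gives a one-to-one transducer $\mathcal{E}_\varepsilon : X \to 2^\omega$ with $\limsup_n |\mathcal{E}_\varepsilon(x[1..n])|/n < h(X) + \varepsilon$, and a one-to-one transducer $\mathcal{D}_\varepsilon : 2^\omega \to X$ with $\limsup_n |\mathcal{D}_\varepsilon(y[1..n])|/n < 1/h(X) + \varepsilon$. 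The only mild nuisance is that Proposition~\ref{pro:coding} is stated for irreducible \emph{and aperiodic} SFTs while $X$ is only assumed irreducible; I will simply assume aperiodicity here too (or invoke the remark after Proposition~\ref{pro:coding} that the aperiodicity hypothesis can be removed, passing to a power of the shift if necessary).

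\textbf{From a self-compressor to a compressor into the full shift.} Suppose $\mathcal{C} : X \to X$ is a compressor with $\rho_{\mathcal{C}}(x) = \liminf_n |\mathcal{C}(x[1..n])|/n < 1$. Fix $\delta > 0$ small and set $\mathcal{C}' = \mathcal{E}_\delta \circ \mathcal{C} : X \to 2^\omega$. A composition of bounded-to-one transducers is bounded-to-one (the preimage of a sequence under $\mathcal{C}'$ is contained in the union, over the at most $K_{\mathcal{E}}$ intermediate sequences, of their $\mathcal{C}$-preimages), so $\mathcal{C}'$ is a compressor. For the ratio, I need to track how the $\limsup$ bound for $\mathcal{E}_\delta$ interacts with the $\liminf$ for $\mathcal{C}$: writing $z = \mathcal{C}(x)$, one has $|\mathcal{C}'(x[1..n])| = |\mathcal{E}_\delta(z[1..m_n])|$ where $m_n = |\mathcal{C}(x[1..n])|$, and since $m_n/n$ has liminf $< 1$ along a subsequence while $|\mathcal{E}_\delta(z[1..m])|/m$ is eventually $< h(X) + \delta$ for all large $m$, one gets $\rho_{\mathcal{C}'}(x) \le (h(X) + \delta) \cdot \rho_{\mathcal{C}}(x)$. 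Since $\rho_{\mathcal{C}}(x) < 1$ strictly, choosing $\delta$ small enough makes this $< h(X)$, as required. (One should note $m_n \to \infty$ since $\mathcal{C}$ is bounded-to-one, hence not eventually constant on a nondegenerate input; this is where real-time plus trimness is used.)

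\textbf{From a compressor into the full shift to a self-compressor.} Conversely, suppose $\mathcal{C}' : X \to 2^\omega$ is a compressor with $\rho_{\mathcal{C}'}(x) < h(X)$. Fix $\delta > 0$ and set $\mathcal{C} = \mathcal{D}_\delta \circ \mathcal{C}' : X \to X$, again bounded-to-one by the same composition argument. Writing $y = \mathcal{C}'(x)$ and $p_n = |\mathcal{C}'(x[1..n])|$, we have $|\mathcal{C}(x[1..n])| = |\mathcal{D}_\delta(y[1..p_n])|$, and combining the eventual bound $|\mathcal{D}_\delta(y[1..p])|/p < 1/h(X) + \delta$ with the liminf bound on $p_n/n$ gives $\rho_{\mathcal{C}}(x) \le (1/h(X) + \delta)\,\rho_{\mathcal{C}'}(x) < (1/h(X) + \delta)\, h(X)$, which is $< 1$ for $\delta$ small. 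Hence $\mathcal{C}$ witnesses compressibility of $x$ in $X$.

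\textbf{Main obstacle.} The routine part is the composition bookkeeping; the one genuinely delicate point is the interplay between $\liminf$ (defining the compression ratio) and $\limsup$ (bounding the coding transducers), together with the subtlety that in a composition the ``clock'' of the outer transducer runs on the \emph{output length} $m_n$ of the inner one, not on $n$. The clean way to handle this is to first establish the general inequality $\rho_{\mathcal{G} \circ \mathcal{H}}(x) \le \big(\limsup_m |\mathcal{G}(\mathcal{H}(x)[1..m])|/m\big) \cdot \rho_{\mathcal{H}}(x)$ whenever $|\mathcal{H}(x[1..n])| \to \infty$, extract it as a small lemma, and then apply it twice. Verifying $|\mathcal{H}(x[1..n])| \to \infty$ for a bounded-to-one real-time transducer $\mathcal{H}$ is the last thing to check, and it follows because if the output stabilized, infinitely many distinct finite inputs would map to a fixed finite output that extends to the same infinite output, contradicting the bounded-to-one property.
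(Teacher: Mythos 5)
Your proposal is correct and takes essentially the same route as the paper: the paper also composes the given compressor with the coding transducer supplied by Proposition~\ref{pro:coding} (with $Y = 2^\omega$ in one direction and with $X = 2^\omega$, $Y = X$ in the other) and bounds the ratio by $(h(X)+\varepsilon)\rho_{\mathcal{C}}(x)$, respectively $(1/h(X)+\varepsilon)\rho_{\mathcal{C}'}(x)$. Your extra bookkeeping (bounded-to-one being preserved under composition, the $\liminf$/$\limsup$ interplay along the inner transducer's output length, and $|\mathcal{C}(x[1..n])| \to \infty$) merely makes explicit what the paper leaves implicit.
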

\begin{proof}
  Suppose that there is a compressor $\mathcal{C}: X \to X$ such that
  $\rho_{\mathcal{C}}(x) < 1$.  Let $\varepsilon$ be a positive real number
  such that $(h(X)+\varepsilon)\rho_{\mathcal{C}}(x) < h(X)$. By the
  previous proposition with $Y = 2^\omega$, there is a transducer~$\mathcal{T}$
  from~$X$ to~$2^\omega$ such that
  $\limsup_{n \to \infty} |\mathcal{T}(x[1..n]|)/n \le h(X) + \varepsilon$.
  The composition $\mathcal{C}' = \mathcal{T} \circ \mathcal{C}$ gives the
  required compressor.  The converse is proved similarly by using the
  previous proposition with $X = 2^\omega$ and $Y = X$.
\end{proof}

The following result is a classical generalization of Kraft's inequality.
It is the key lemma used to prove that normal sequences cannot be
compressed by finite state machines.
\begin{lemma} \label{lem:kraft}
  Let $\mathcal{C}$ be a compressor from~$X$ to~$Y$ with $|Q|$ states.  For
  each word~$w$, let $L_{\mathcal{C}}(w)$ be the minimum number of symbols
  written by a finite run in~$\mathcal{C}$ with input label~$w$.  Then
  \begin{displaymath}
    \sum_{w \in A^{\ell}} 2^{-L_{\mathcal{C}}(w)} \le
    K|Q|^2 (1 + \ell r_{\mathcal{C}})
  \end{displaymath}
  where $|\mathcal{C}^{-1}(y)| \le K$ for each $y \in Y$ and
  $r_{\mathcal{C}}$ is the maximum number of symbols written by a single
  transition of~$\mathcal{C}$.
\end{lemma}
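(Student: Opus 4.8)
The plan is to count, for each output word~$v$ that can be produced, how many input words of length~$\ell$ could have produced~$v$ along some run, and then to use the bounded-to-one hypothesis to bound this. First I would fix the compressor~$\mathcal{C}$ and a length~$\ell$. For a word $w \in A^\ell$, the quantity $L_{\mathcal{C}}(w)$ is realized by some finite run $p \trans{w\,|\,v} q$ with $|v| = L_{\mathcal{C}}(w)$; record the triple $(p, v, q)$ of start state, output word, and end state witnessing the minimum. Note that $|v| \le \ell\, r_{\mathcal{C}}$ since each of the $\ell$ transitions writes at most $r_{\mathcal{C}}$ symbols, so $v$ ranges over $\bigcup_{m=0}^{\ell r_{\mathcal{C}}} B^m$, a set that for fixed start and end state we will treat via its prefix structure.

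The key step is the following injectivity-type argument. Group the words $w \in A^\ell$ by the associated pair of states $(p,q)$; there are at most $|Q|^2$ such pairs, which will account for the $|Q|^2$ factor. Within one group, I claim that for each fixed output word~$v$, the number of $w$'s mapped to $(p,v,q)$ is at most~$K$: if $w_1$ and $w_2$ both admit runs from $p$ to $q$ with output exactly~$v$, then since $\mathcal{C}$ is trim, $q$ lies on an accepting run, so one can extend both runs by a common accepting suffix from~$q$ and prepend a common accepting prefix into~$p$, producing two accepting runs whose outputs agree (they share the prefix into $p$, the middle block $v$, and the suffix out of $q$); hence $w_1$ and $w_2$ are both preimages of the same output sequence, and the bounded-to-one property gives at most~$K$ of them. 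Therefore, for fixed $(p,q)$,
\begin{displaymath}
  \sum_{\substack{w \in A^\ell \\ w \text{ in group } (p,q)}} 2^{-L_{\mathcal{C}}(w)}
  \;\le\; K \sum_{m=0}^{\ell r_{\mathcal{C}}} \sum_{v \in B^m} 2^{-|v|}
  \;=\; K \sum_{m=0}^{\ell r_{\mathcal{C}}} |B|^m 2^{-m}.
\end{displaymath}
Here a subtlety must be handled: $2^{-L_{\mathcal{C}}(w)} = 2^{-|v|}$ and $v$ lives over the output alphabet~$B$, which need not be binary, so $\sum_{v \in B^m} 2^{-m} = (|B|/2)^m$ need not be bounded by~$1$. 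The fix is to observe that it suffices to sum over those $v$ that are actual outputs of runs, and — more robustly — to replace the crude count by a Kraft-style argument: the outputs $v$ realizing the minima for distinct $w$'s (in a fixed group, modulo the factor~$K$) can be taken to form a prefix-free set, since if $v_1$ is a proper prefix of $v_2$ one derives a contradiction with minimality or with bounded-to-one by a pumping argument on the run. Then $\sum 2^{-|v|} \le 1$ for that prefix-free family, and summing over the $\le |Q|^2$ groups and the factor $K$ yields $\sum_{w} 2^{-L_{\mathcal{C}}(w)} \le K|Q|^2$; the stated bound $K|Q|^2(1 + \ell r_{\mathcal{C}})$ is then weaker and follows a fortiori.

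I expect the main obstacle to be precisely this last point: making the set of witnessing outputs genuinely prefix-free (or otherwise controlling $\sum_v 2^{-|v|}$ without a gratuitous $|B|^m$ blow-up), because a single output word~$v$ may be the minimal output for many different inputs and one must be careful that "minimality'' plus "trim'' plus "bounded-to-one'' really do force the prefix-free structure rather than merely suggesting it. The factor $(1 + \ell r_{\mathcal{C}})$ in the statement strongly hints that the authors instead partition the outputs by their length $m \in \{0, 1, \dots, \ell r_{\mathcal{C}}\}$ — that is $\ell r_{\mathcal{C}} + 1$ length classes — and argue that within one state-pair and one length class the map $w \mapsto v$ is at most $K$-to-one onto a set on which $\sum_v 2^{-|v|} = \sum_v 2^{-m} \le 1$ holds because at most $2^m$ distinct such $v$ can arise (a counting consequence of bounded-to-one applied at that fixed length). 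Summing the bound $K$ over $|Q|^2$ state-pairs and $(1+\ell r_{\mathcal{C}})$ length classes gives exactly $K|Q|^2(1+\ell r_{\mathcal{C}})$, so I would structure the final write-up around that partition.
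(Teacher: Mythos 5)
The first half of your proposal is exactly the paper's key step: for fixed states $p,q$ and a fixed output word $v$, at most $K$ words $w \in A^{\ell}$ admit a run from $p$ to $q$ with input $w$ and output $v$, proved as you do by using trimness to prepend a common run from an initial state into $p$ and append a common final run from $q$, so that all such $w$ become preimages of a single output sequence. The gap is in how you then control the number of possible witnessing outputs $v$. Your prefix-free detour is unsubstantiated, and you flag this yourself: nothing in minimality, trimness or the bounded-to-one hypothesis visibly forces the minimal outputs within one state-pair group to be prefix-free (you only allude to ``a pumping argument''), and the stronger bound $K|Q|^2$ you would deduce from it is never established. Your fallback justification is also wrong as stated: ``at most $2^m$ distinct such $v$ can arise'' is \emph{not} a consequence of the bounded-to-one property at a fixed output length --- a one-to-one transducer over an output alphabet $B$ with $|B| > 2$ (for instance a letter-to-letter copying transducer over a ternary alphabet) produces far more than $2^m$ distinct outputs of length $m$.

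The missing observation is much simpler: this is a base-$2$ Kraft-type inequality, intended for (and applied in Lemma~\ref{lem:lempel} to) compressors whose output shift $Y$ is contained in $\{0,1\}^\omega$; that is also why the weight $2^{-L_{\mathcal{C}}(w)}$ has base $2$. With a binary output alphabet, ``at most $2^k$ choices for $v$ of length $k$'' is trivial, and the paper's proof closes at once: a word $w$ with $L_{\mathcal{C}}(w) = k$ lies in some set $\{ w : p \trans{w|v} q \}$ with $|v| = k$, each such set has at most $K$ elements, so $|\{ w \in A^{\ell} : L_{\mathcal{C}}(w) = k \}| \le K|Q|^2 2^k$; since every run on an input of length $\ell$ writes at most $\ell r_{\mathcal{C}}$ symbols, $k$ ranges over $\{0,\dots,\ell r_{\mathcal{C}}\}$, and summing $2^{-k}$ times this bound over the $1 + \ell r_{\mathcal{C}}$ values of $k$ gives exactly $K|Q|^2(1 + \ell r_{\mathcal{C}})$. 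So your final paragraph guesses the right partition (state pairs times output lengths), but the correct reason for the $2^k$ count is the binary output alphabet, not the bounded-to-one hypothesis; the ``subtlety'' about general $B$ that derailed you would, for a genuinely larger output alphabet, have to be handled by changing the weights to $|B|^{-L_{\mathcal{C}}(w)}$ rather than by the argument you sketch.
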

\begin{proof}
  We claim that for each integer~$k$ the cardinality of the set
  $\{ w \in A^{\ell}: L_{\mathcal{C}}(w) = k \}$ is at most $K|Q|^22^k$.
  Let $p$ and $q$ be two states of~$\mathcal{C}$ and $v$ a word of
  length~$k$.  We claim that the set
  $\{ w \in A^{\ell} : p \trans{w|v} q \}$ has cardinality at most~$K$.
  Suppose that there are $n$ distinct words $w_1,\ldots,w_n$ in this set.
  Since the transducer is trim, there is a finite run $i \trans{u|v'} p$
  from a n initial state~$i$ and a final run $q \trans{x|y} \infty$.  It
  follows that $\mathcal{C}(uw_jx) = v'vy$ for each $1 \le j \le n$ and
  thus $n \le K$.  If a word~$w$ belongs to
  $\{ w \in A^{\ell}: L_{\mathcal{C}}(w) = k \}$, then there are two states
  $p,q$ and a word~$v$ of length~$k$ such that $w$ belongs to
  $\{ w \in A^{\ell} : p \trans{w|v} q \}$.  This proves the upper bound
  for the cardinality of $\{ w \in A^{\ell}: L_{\mathcal{C}}(w) = k \}$
  since there are $|Q|^2$ possible choices for $p$ and~$q$ and $2^k$
  possible choices for~$v$.
  \begin{align*}
    \sum_{w \in A^{\ell}} 2^{-L_{\mathcal{C}}(w)} & =
    \sum_{k = 0}^{\ell r_{\mathcal{C}}} |\{ w \in A^{\ell}: L_{\mathcal{C}}(w) = k \}| 2^{-k}  \\
    & \le \sum_{k = 0}^{\ell r_{\mathcal{C}}} K|Q|^2  = K|Q|^2 (1 + \ell r_{\mathcal{C}})
  \end{align*}
\end{proof}

Let $u \in A^n$ and $w \in A^{\ell}$ be two finite words of length $n$
and~$\ell$ and let $x$ be an infinite word.  First define the relative
frequency $P(w,u)$ by $P(w,u) = \ell\alocc{u}{w}/n$.  This is just the
number of aligned occurrences of~$w$ in~$u$ normalized by the factor
$\ell/n$ such that $\sum_{w\in A^{\ell}}{P(w,u)} = 1$.  The
\emph{$\ell$-block entropy} $h_{\ell}(u)$ of~$u$ is then defined
$h_{\ell}(u) = - \frac{1}{\ell} \sum_{w \in A^{\ell}} P(w,u) \log P(w,u)$.
This $\ell$-block entropy is extended to infinite words by setting
$h_{\ell}(x) = \liminf_{k \to \infty} h_{\ell}(x[1..k\ell])$.  The
\emph{block entropy} $h(x)$ of~$x$ is then defined by
$h(x) = \liminf_{\ell \to \infty} h_{\ell}(x)$

It should be noted that the block entropy~$h_{\ell}(x)$ has been
defined using aligned occurrences.  This is the same as~$\hat{H}_{\ell}$
in~\cite{Sheinwald94} but not the same as~$\hat{H}_{\ell}$
in~\cite{Lempel78} where entropy is defined using non-aligned occurrences.
Therefore the existence of the limit $\lim_{\ell \to \infty} h_{\ell}(x)$
does not follow from the results in~\cite{Lempel78} and $h(x)$ is defined
as $h(x) = \liminf_{\ell \to \infty} h_{\ell}(x)$.
\begin{lemma}[Proof of Theorem~3 in \cite{Lempel78}] \label{lem:lempel}
  Given an alphabet $A$ and a sequence $x \in A^\omega$. For any
  compressor $\mathcal{C}: A^\omega \to \{0,1\}^\omega$:
  \begin{displaymath}
   \rho_{\mathcal{C}}(x) \ge h(x) 
  \end{displaymath}
\end{lemma}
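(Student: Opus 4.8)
The plan is to combine the generalised Kraft inequality of Lemma~\ref{lem:kraft} with the classical converse of the source coding theorem (Gibbs' inequality). Fix the compressor $\mathcal{C}$ and let $K$, $|Q|$ and $r = r_{\mathcal{C}}$ be the constants attached to it; for a block length $\ell$, abbreviate $C_\ell = K|Q|^2(1+\ell r)$, so that Lemma~\ref{lem:kraft} gives $\sum_{w \in A^\ell} 2^{-L_{\mathcal{C}}(w)} \le C_\ell$, and note that $\tfrac1\ell \log C_\ell \to 0$ as $\ell \to \infty$.

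First I would relate the output length on a prefix to the numbers $L_{\mathcal{C}}(w)$. Split $x[1..k\ell] = w_1 w_2 \cdots w_k$ into $k$ consecutive blocks of length~$\ell$. The unique accepting run on~$x$, restricted to the factor~$w_i$, is a finite run with input label~$w_i$, hence outputs at least $L_{\mathcal{C}}(w_i)$ symbols; summing over $i$, $|\mathcal{C}(x[1..k\ell])| \ge \sum_{i=1}^{k} L_{\mathcal{C}}(w_i)$. Since $\mathcal{C}(x[1..k\ell])$ is a prefix of $\mathcal{C}(x[1..n])$ whenever $k\ell \le n < (k+1)\ell$, dividing by~$n$ and letting $n \to \infty$ (so $k \to \infty$, with $k/(k+1)\to 1$) yields $\rho_{\mathcal{C}}(x) \ge \liminf_{k\to\infty} \frac{1}{k\ell}\sum_{i=1}^{k} L_{\mathcal{C}}(w_i)$.

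Next I would bring in the $\ell$-block entropy. Grouping equal blocks, $\frac1k \sum_{i=1}^{k} L_{\mathcal{C}}(w_i) = \sum_{w \in A^\ell} P(w, x[1..k\ell])\, L_{\mathcal{C}}(w)$, since $P(w, x[1..k\ell])$ is exactly the fraction of the $w_i$ equal to~$w$. Then Gibbs' inequality applied to the probability vector $p_w = P(w, x[1..k\ell])$ and the sub-probability vector $q_w = 2^{-L_{\mathcal{C}}(w)}/C_\ell$ (whose total mass is $\le 1$) gives $\sum_w p_w L_{\mathcal{C}}(w) \ge -\sum_w p_w \log p_w - \log C_\ell = \ell\, h_\ell(x[1..k\ell]) - \log C_\ell$. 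Dividing by~$\ell$ and taking $\liminf_{k\to\infty}$ gives $\rho_{\mathcal{C}}(x) \ge h_\ell(x) - \tfrac1\ell \log C_\ell$. As this holds for every~$\ell$ and $\tfrac1\ell\log C_\ell \to 0$, taking $\liminf_{\ell\to\infty}$ on the right gives $\rho_{\mathcal{C}}(x) \ge \liminf_{\ell\to\infty} h_\ell(x) = h(x)$.

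There is no deep obstacle here: the two engines are the standard Kraft and Gibbs inequalities. The points needing care are purely bookkeeping: checking that the restriction of the accepting run to a length-$\ell$ factor is genuinely a finite run so that $L_{\mathcal{C}}$ applies; handling words $w$ that never occur as some $w_i$ (they may have $L_{\mathcal{C}}(w) = \infty$ but contribute~$0$ to the sums that matter, and Gibbs only needs $\sum_{w \in \operatorname{supp} p} q_w \le 1$); and making sure the passage from $n = k\ell$ to arbitrary~$n$, and from $\liminf$ over $k$ to $\liminf$ over $n$, goes in the direction that preserves the lower bound on $\rho_{\mathcal{C}}(x)$.
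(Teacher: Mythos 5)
Your proposal is correct and follows essentially the same route as the paper's proof: decompose $x[1..k\ell]$ into aligned $\ell$-blocks, lower-bound the output length by $\sum_w P(w,x[1..k\ell])L_{\mathcal{C}}(w)$, and control this via the generalized Kraft inequality of Lemma~\ref{lem:kraft} together with a concavity-of-$\log$ step (your Gibbs inequality with the sub-probability $2^{-L_{\mathcal{C}}(w)}/C_\ell$ is the same estimate the paper obtains by Jensen), before letting $k\to\infty$ and then $\ell\to\infty$. You merely make explicit some bookkeeping the paper leaves implicit (restricting the accepting run to each block, and passing from $n=k\ell$ to arbitrary $n$), which is fine.
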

For completeness, we present the proof of this theorem as given
in~\cite{Sheinwald94}.
\begin{proof}
  Let us consider a bounded-to-one compressor $\mathcal{C}$ with $|Q|$
  states.  Suppose that for each $y \in \{0,1\}^\omega$,
  $|\mathcal{C}^{-1}(y)| \le K$.  For a word $w \in A^{\ell}$,
  $\mathcal{C}$ produces an output depending on its current state.  Let us
  denote as $L_{\mathcal{C}}(w)$ the length of the shortest output that
  $\mathcal{C}$ produces when reading $w$, where the minimum is taken over
  all possible finite runs with $w$ as input label.
  \begin{displaymath}
    \rho_{\mathcal{C}}(x[1..k\ell])
    \ge \frac{1}{\ell} \sum_{w \in A^l} P(x[1..k\ell], w) \cdot L_{\mathcal{C}}(w) 
  \end{displaymath}
  Then,
  \begin{align*} 
  h_{\ell}(x[1..k\ell]) - & \rho_{\mathcal{C}}(x[1..k\ell]) \le \\
   & \frac{1}{\ell} \sum_{w \in A^{\ell}} P(x[1..k\ell], w) 
     \log \left(\frac{2^{-L_{\mathcal{C}}(w)}}{P(x[1..k\ell],w)}\right) 
  \end{align*}
  By Jensen inequality applied to the $\log$ function,
  \begin{displaymath} 
    h_{\ell}(x[1..k\ell])  - \rho_{\mathcal{C}}(x[1..k\ell]) \le  
    \frac{1}{\ell} \log \left( \sum_{w \in A^{\ell}} 2^{-L_{\mathcal{C}}(w)} \right) 
  \end{displaymath}
  By the generalized Kraft's inequality of Lemma~\ref{lem:kraft},
  \begin{displaymath}
    h_{\ell}(x[1..k\ell]) - \rho_{\mathcal{C}}(x[1..k\ell]) \le
    \frac{1}{\ell} \log (K|Q|^2 (1 + \ell r_{\mathcal{C}}))
  \end{displaymath}
  and taking first the limit when $k \to \infty$ and then the limit when
  $\ell \to \infty$ yields the required inequality $h(x) \le
  \rho_{\mathcal{C}}(x)$. 
\end{proof}

We now come to the proof of the main theorem.
\begin{proof}[Proof of Theorem~\ref{thm:main}]
  Assume that the sequence $x$ is normal in the shift~$X$.  Let $\phi:
  [0,1] \to \Real$ be defined as $\phi(p) = -p \log p$ with the usual
  convention that $0 \log 0 = 0$.  Since $\phi$ is a continuous function
  and for every word $w \in A^{\ell}$, $\lim_{k \to \infty} P(w,
  x[1..k\ell]) = \mu(w)$
  \begin{displaymath}
    h_{\ell}(x) = \frac{1}{\ell} \sum_{w \in A^{\ell}} \phi(\mu(w)).
  \end{displaymath}
  Therefore $h(x) = \liminf_{\ell \to \infty} h_{\ell}(x) = h(\mu) = h(X)$.
  By Lemma~\ref{lem:lempel}, there is no compressor
  $\mathcal{C}': X \to \{0,1\}^\omega$ with a compression ratio better
  than~$h(X)$.  By Corollary~\ref{cor:coding} we conclude that there is no
  compressor $\mathcal{C}: X \to X$ such that $\rho_{\mathcal{C}}(x) < 1$.

  Now suppose that the sequence~$x$ is not normal.  By definition, there is
  a finite word $w_0 \in A^*$ such that either
  \begin{displaymath}
    \lim_{n \to \infty}  \frac{\alocc{x[1..n\ell]}{w_0}}{n} \neq \mu^X(w_0)
  \end{displaymath}
  or this limit does not exist where $\ell = |w_0|$ is the length
  of~$w_0$.

  It is possible to choose a subsequence of positions
  $1 \le n_1 < n_2 < n_3 < \cdots$ such that the ratio
  $\alocc{x[1..n_i\ell]}{w}/n_i$ converges for every $w \in A^{\ell}$ and
  such that the limit of this ratio is different from $\mu(w_0)$ for
  $w = w_0$.

  Let $M = |\blocks{\ell}{X}|$ be the number of blocks of length~$\ell$
  in~$X$ and let $B = \{1, 2, \ldots, M\}$ be an alphabet of
  cardinality~$M$. We can encode $x$ into a sequence $y \in B^\omega$ by
  taking aligned words of length $\ell$ in $x$ and representing them as a
  single symbol of~$B$ using a bijective mapping
  $f: \blocks{\ell}{X} \to B$.  The sequence~$y$ belongs to a subshift of
  finite type~$Y$ with entropy $h(Y) = \ell h(X)$.

  For every $b \in B$, the limit
  $\lim_{i \to \infty} \occ{y[1..n_i]}{b}/n_i$ does exist, and for
  $b_0 = f(w_0)$, it satisfies
  $\lim_{i \to \infty} \occ{y[1..n_i]}{b_0}/n_i \neq \mu^Y(b_0) = \mu^X(w_0)$.

  Let $n'_1, n'_2, \ldots$ be a subsequence of $n_1, n_2, \ldots$ such that
  the ratio $\occ{y[1..n_i']}{ab}/n_i'$ converges for each $a,b \in B$.
  Define the distribution vector $\pi = (\pi_a)_{a \in B}$ and the
  stochastic matrix $P = (P_{ab})_{a,b \in B}$ by
  \begin{displaymath}
    \pi_a  = \lim_{i \to \infty} \frac{\occ{y[1..n_i'}{a}}{n_i'}
    \quad\text{and}\quad
    P_{ab} = 
    \begin{cases}
      \displaystyle
      \lim_{i \to \infty} \frac{\occ{y[1..n_i']}{ab}}{\occ{y[1..n_i']}{a}} &
        \text{ if } \pi_a \neq 0 \\
      \displaystyle
      \frac{1}{M} & \text{otherwise.}
      \end{cases}
  \end{displaymath}

  The stochastic matrix $P$ is used to defined a measure~$\nu$ on~$A^*$
  by setting for each word $a_1a_2 \cdots a_n \in B^*$
  \begin{displaymath}
    \nu(a_1a_2\cdots a_n) = \frac{1}{M} \prod_{i=1}^{n-1}{P_{a_{i} a_{i+1}}}.
  \end{displaymath}
  with the convention that $\nu(a) = 1/M$ for each symbol $a \in B$.
  Note that this measure might be not invariant because the vector
  $(1,\ldots,1)$ might be not a left eigenvector of the matrix~$P$.

  Let $k$ be an integer to be fixed later. We construct an appropriate
  encoding of~$B^k$ based on the values of~$\nu$. Some care must be taken
  for words where $\nu$ takes the value~$0$.  Let
  $S = \{ u \in B^k : \nu(u) = 0\}$ be the subset of words of length~$k$
  mapped to~$0$ by~$\nu$ and $T = B^k \setminus S$ be its complement.  Note
  that if $u = a_1a_2\cdots a_k$ belongs to~$S$, there is then some index
  $1 \le i \le k-1$ such that $P_{a_i a_{i+1}} = 0$, which means that
  $\lim_{i \to \infty} \occ{y[1..n_i']}{a_ia_{i+1}}/n_i' = 0$, and in turn
  $\lim_{i \to \infty} \alocc{y[1..n_i'}{u}/n_i' = 0$.

  If $S$ is non-empty, define a one-to-one mapping
  \begin{displaymath}
   C_S : S \to \{0,1\}^L \text{  where  } L = \lceil \log |S| \rceil 
  \end{displaymath}

  For $T$, we define a prefix-free code 
  \begin{displaymath}
    C_T : T \to \{0,1\}^* \text{  such that  }
    |C_T(u)| = \left\lceil -\log \nu(u)\right\rceil 
  \end{displaymath}
  The existence of such a code is guaranteed by Kraft's inequality since
  $\sum_{u \in T} \nu(u) = 1$.   The functions $C_S$ and~$C_T$ are now
  used to define a unique function~$C_k: B^k \to \{0,1\}^*$ as follows.
  \begin{displaymath}
    C_k(u) = 
    \begin{cases}
      0C_S(u) & \text{if  $u \in S$} \\
      1C_T(u) & \text{if  $u \in T$}
    \end{cases}
  \end{displaymath}
  Since both functions $C_S$ and~$C_T$ are one-to-one, the function~$C_k$
  is also one-to-one.  This latter function is now used to define a
  transducer $\mathcal{C}: Y \to \{0,1\}^\omega$ which reads each sequence
  in~$Y$ by blocks of length~$k$ and for each read block $u \in B^k$
  outputs $C_k(u)$.  Since the function $C_k$ is one-to-one, the
  transducer~$\mathcal{C}$ is also one-to-one.  We now estimate its
  compression ratio~$\rho_{\mathcal{C}}(y)$ on the input~$y$.
  
  \begin{align*}
  \rho_{\mathcal{C}}(y)
  & = \liminf_{n \to \infty} \frac{|\mathcal{C}(y[1..n])|}{n} \\
  & \le \liminf_{i \to \infty} \frac{|\mathcal{C}(y[1..n_i'])|}{n_i'} \\
  & = \liminf_{i \to \infty} \frac{1}{n_i'} \sum_{u \in B^k}
        \alocc{y[1..n_i']}{u}\ |C_k(u)| \\
  & = \liminf_{i \to \infty} \frac{1}{n_i'} 
      \left( 
         \sum_{u \in S} \alocc{y[1..n_i']}{u} (L+1)
         + \sum_{u \in T} \alocc{y[1..n_i']}{u} (1 + |C_T(u)|)
      \right) \\
  & = \liminf_{i \to \infty}  \frac{1}{n_i'} \sum_{u \in T} \alocc{y[1..n_i']}{u} (1+|C_T(u)|) \\
  & = \liminf_{i \to \infty} \frac{1}{n_i'} \sum_{u \in T} \alocc{y[1..n_i']}{u} 
      \left( 1 + \left\lceil -\log \nu(u) \right\rceil \right) \\
  & \le \liminf_{i \to \infty} \frac{1}{n_i'} \sum_{u \in T} \alocc{y[1..n_i']}{u} 
      \left( 2 + \log \frac{M}{\prod_{j=1}^{k-1} P_{u_i u_{i+1}}} \right) \\
  & = \frac{(2 + \log M) \left\lfloor n_i'/k \right\rfloor}{n_i'}
  - \limsup_{i \to \infty} \frac{1}{n_i'} \sum_{u \in T} \alocc{y[1..n_i']}{u} 
      \sum_{j=1}^{k-1} \log (P_{u_i u_{i+1}}) \\ 
  & \le \frac{2 + \log M}{k} - \limsup_{i \to \infty} \frac{1}{n_i'} 
  \sum_{j = 1}^{n_i'-1} \log P_{y_i y_{i+1}} \\
  & = \frac{2 + \log M}{k} 
      - \limsup_{i \to \infty} \sum_{a,b \in B} \frac{\occ{y[1..n_i']}{ab}}{n_i'} \log P_{ab} \\
  & = \frac{2 + \log M}{k} - \sum_{a,b \in B} \pi_a P_{ab} \log P_{ab}
  \end{align*}

  Since the last inequality is valid for any $k \in \Nat$, and
  \begin{displaymath}
    - \sum_{a,b \in B} \pi_a P_{ab} \log P_{ab} =
    h(\mu_{\pi,P}) < h(\mu^Y) = h(Y) = \ell h(X) 
  \end{displaymath}
  We conclude that there is a compressor
  $\mathcal{C} : Y \to \{0,1\}^\omega$, such that
  $\rho_{\mathcal{C}}(y) < \ell h(X)$.  Now, define the compressor
  $\mathcal{C}' : X \to \{0,1\}^\omega$, which takes blocks of $\ell$
  symbols from the input, maps them into $B$ using the bisection
  $f : \blocks{\ell}{X} \to B$ and then simulates the
  transducer~$\mathcal{\mathcal{C}}$ to produce a binary output.  Its
  compression ratio on the input~$x$ is given by
  \begin{displaymath}
    \rho_{\mathcal{C}'}(x) = \rho_{\mathcal{C}}(y)/\ell  < h(X).
  \end{displaymath}
  This inequality implies, by Corollary~\ref{cor:coding}, that there is a
  compressor $\mathcal{C}'' : X \to X$, such that $\rho_{\mathcal{C}''}(x) < 1$.
\end{proof} 

\section{Outlook}

The main theorem (Theorem~\ref{thm:main}) is stated and proved for shifts
of finite type for simplicity.  We would like to provide some evidence that
the result can be generalized to the case of sofic shifts.  The Parry
measure of an irreducible sofic shift does exist and it is an hidden Markov
chain (see \cite[Thm~1]{Weiss77}, \cite[Thm~4]{Ficher75} and
\cite[p.~444]{LindMarcus92}).

In the proof of Theorem~\ref{thm:equivdefs}, the fact that the
measure~$\mu$ is a Markov chain is only used through the ergodic theorem.
Since this latter result also holds for hidden Markov chains,
Theorem~\ref{thm:equivdefs} can be lifted to hidden Markov chains.

The proof of Proposition~\ref{pro:coding} can be adapted to sofic shifts.
The rest of the proof of Theorem~\ref{thm:main} does not really use
the fact that the shift~$X$ is of finite type.  

\section*{Acknowlegments}

The authors would like to thank the anonymous referee for useful
suggestions and Ver\'onica Becher for very fruitful discussions.  Alvarez's
work was financed through a postgraduate scholarship from CONICET (National
Scientific and Technical Research Council of Argentina) and his stays at
France to collaborate with Carton were possible thanks to financial support
from Laboratoire International Associé Infinis, the Ministry of Education
of Argentina and the French agency Campus France.  Carton is member of the
Laboratoire International Associ\'e INFINIS.  He is partially supported by
the ECOS project PA17C04 and by the DeLTA project (ANR-16-CE40-0007).

\bibliographystyle{plain}
\bibliography{shifts}

\begin{minipage}{\textwidth}
\noindent
Nicol\'as Alvarez\\
ICIC - Universidad Nacional del Sur, CONICET\\
Departamento de Ciencias en Ingenier\'ia de la Computaci\'on\\
\texttt{naa@cs.uns.edu.ar}\\
\medskip\\
Olivier Carton\\
Institut de Recherche en Informatique Fondamentale\\
Universit\'e Paris Diderot\\
\texttt{Olivier.Carton@irif.fr}
\end{minipage}

\end{document}